\spnewtheorem*{claimproof}{Proof}{\itshape}{\rmfamily}
\begin{document}

\title{Parameterized Algorithms for Minimum Sum Vertex Cover}

\author{Shubhada Aute\inst{1}\orcidID{0009-0000-2964-0368} \and
Fahad Panolan\inst{2}\orcidID{0000-0001-6213-8687} }

\authorrunning{S. Aute and F. Panolan}

\institute{Indian Institute of Technology Hyderabad, Sangareddy, India \\
\email{aute.shubhada@gmail.com}\\
 \and
School of Computing, University of Leeds, Leeds, UK\\
\email{fahad.panolan@gmail.com}}

\maketitle      

\begin{abstract}
Minimum sum vertex cover of an $n$-vertex graph $G$ is a bijection $\phi : V(G) \to [n]$  that minimizes the cost $\sum_{\{u,v\} \in E(G)}  \min \{\phi(u), \phi(v) \}$. 
Finding a minimum sum vertex cover of a graph (the MSVC problem) is NP-hard.
MSVC is studied well in the realm of approximation algorithms. The best-known approximation factor in polynomial time for the problem is $16/9$ [Bansal, Batra, Farhadi, and Tetali, SODA 2021]. Recently, Stankovic [APPROX/RANDOM 2022] proved that achieving an approximation ratio better than $1.014$ for MSVC is NP-hard, assuming the Unique Games Conjecture.  
We study the MSVC problem from the perspective of parameterized algorithms. The parameters we consider are the size of a minimum vertex cover and the size of a minimum clique modulator of the input graph. We obtain the following results.   

\begin{itemize}
    \item MSVC can be solved in $2^{2^{O(k)}} n^{O(1)}$ time, 
   
     where $k$ is the size of a minimum vertex cover. 

    \item MSVC can be solved in $f(k)\cdot n^{O(1)}$ time for some computable function $f$, where $k$ is the size of a minimum clique modulator. 
\end{itemize}
\keywords{FPT \and Vertex Cover \and Integer Quadratic Programming}
\end{abstract}

\section{Introduction}
\label{sec:intro}

A vertex cover in a graph is vertex subset such that each edge has at least one endpoint in it.
Finding a vertex cover of minimum size is NP-complete, and it is among the renowned 21 NP-complete problems proved by Karp in 1972 \cite{MR0378476}.
Since then, it has been extensively studied in both the fields of approximation algorithms and parameterized algorithms. 
The approximation ratio $2$ of minimum vertex cover is easily achievable using any maximal matching of the graph and it is optimal assuming Unique Games Conjecture~\cite{khot2008vertex}. 

The best-known FPT algorithm for vertex cover has running time $O(1.2738^k+kn)$ \cite{chen2010improved}, where $k$ is the minimum vertex cover size. 
In this paper, we study the well-known \textsc{Minimum Sum Vertex Cover}, defined below. For $n\in \mathbb{N}$, $[n]=\{1,2,\ldots,n\}$.  
\begin{definition}
Let $G$ be an $n$-vertex graph, and let $\phi : V(G) \to [n]$ be a bijection. 
The weight (or cover time or cost) of an edge $e=\{u,v\}$ is $ w(e, \phi ) = \min \{\phi(u), \phi(v) \}$. The cost of the ordering $\phi$ for the graph $G$ is defined as, 
$$ \mu_G(\phi) = \sum_{e \in E(G)} w(e, \phi).   
$$ 

\end{definition}

The objective of \textsc{Minimum Sum Vertex Cover} (MSVC) problem is to find an ordering $\phi$ that minimizes $\mu_G(\phi)$ over all possible orderings. 
\textsc{Minimum Sum Vertex Cover} came up in the context of designing efficient algorithms for solving semidefinite programs \cite{burer2001projected}. 
It was first studied by Feige et al. \cite{feige2004approximating}, though their main focus was on \textsc{Minimum sum set cover} (MSSC), which is a generalization of the MSVC on hypergraphs.

\begin{figure}[t]
    \centering
    \begin{subfigure}{0.49\textwidth}
        \centering
        \includegraphics[scale=0.3]{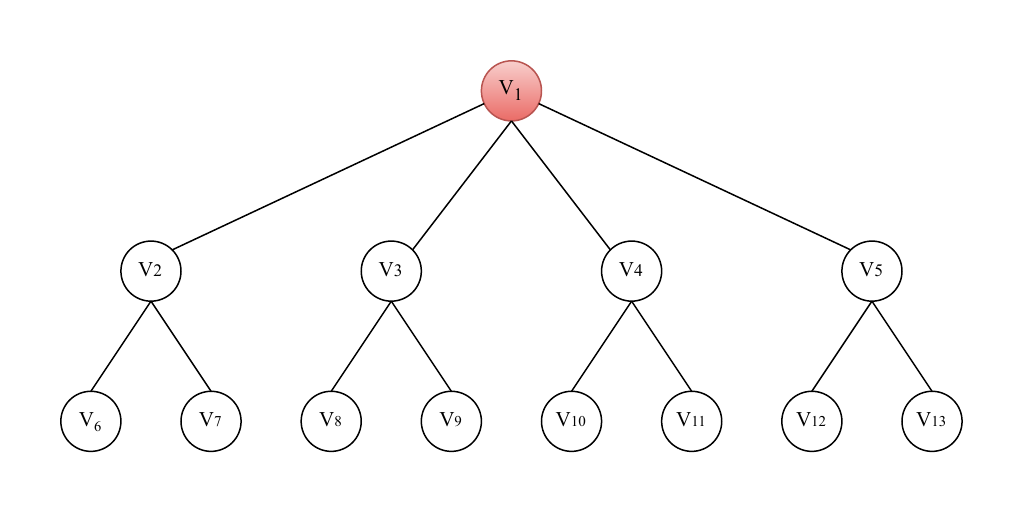}
        \caption{Graph $G_1$}
        \label{g1}
    \end{subfigure}
\begin{subfigure}{0.49\textwidth}
        \centering
        \includegraphics[scale=0.3]{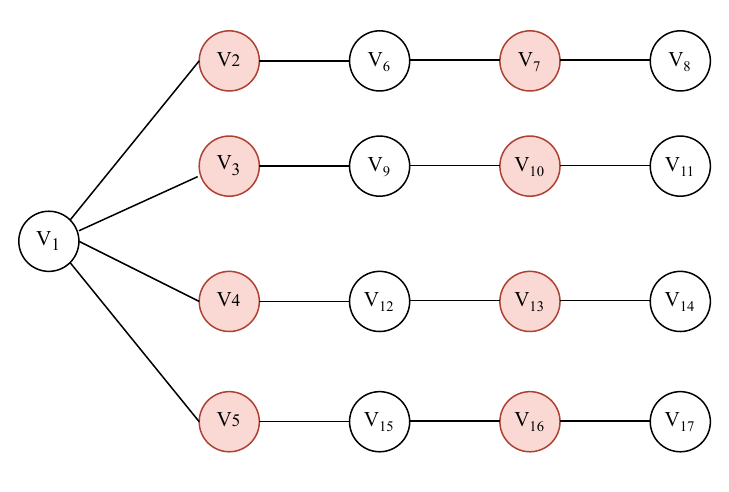}
        \caption{Graph $G_2$}
        \label{g2}
    \end{subfigure}
    \caption{Counter example}
    \label{}
    \end{figure}

As MSVC requires the sum of the weights of the edges to be minimized, it is natural to think of prioritizing vertices based on their degree. 
But this isn't always an optimal approach.
For instance, $v_1$ is the vertex of the maximum degree in the graph $G_1$ as shown in Figure~\ref{g1}. 
Consider an ordering $\psi$ with $v_1$ in the first position. 
For position $2$ onwards, any vertex from $G_1 - v_1$ can cover at most two edges.  
So, the cost of such an ordering is at least $32$. 
However, if the vertices $v_2, v_3, v_4, v_5$ take the first four locations in an ordering, say $\phi$, then cost of  $\phi$ is $30$. 
Though greedy isn't an optimal strategy, it is interesting to find the graphs for which preferring a vertex of highest degree at each location from the remaining graph yields an optimal solution for MSVC. 
Mohan et al. gave a sufficiency condition for graphs admitting greedy solution as optimal \cite{mohan2011sufficiency}.
Interestingly, the greedy approach performs no worse than $4$ times the optimum cost. However, certain bipartite graphs yield a solution from the greedy algorithm that is precisely four times the optimal solution for MSVC \cite{feige2004approximating}. 
The greedy algorithm achieves factor $4$ approximation for MSSC on hypergraphs as well \cite{bar1998chromatic}.

Both the problems MSVC and MSSC are well-studied in the realm of approximation algorithms. 
An improvement over the greedy algorithm is a $2$-factor approximation for MSVC using linear programming \cite{feige2004approximating}. 
The approximation ratio for MSVC was further improved to $1.999946$ \cite{barenholz2006improved}. 
The best approximation ratio for MSVC is $16/9$, which was achieved using a generalized version of MSSC, called the generalized min-sum set cover (GMSSC) \cite{bansal2021improved}. 
Input to GMSSC is a hypergraph $H = (V, E)$ in which every hyperedge has a covering requirement $k_e$ where $k_e \leq |e|$. 
The first location where $k_e$ number of vertices of an edge $e$ has appeared in the ordering is the cover time of  $e$. 
The GMSSC problem is the problem of finding an ordering of vertices that minimizes the sum of cover time of all the hyperedges. 
For all hyperedges of $H$, when $k_e=1$, it boils down to the MSSC problem. 
Bansal et al. provide a $4.642$ approximation algorithm for GMSSC, which is close to the approximation factor of $4$ for MSSC \cite{bansal2021improved}. For every $\epsilon > 0$, it is NP-hard to approximate MSSC within a ratio of $4 - \epsilon$ \cite{feige2004approximating}.

We now state some results from the literature regarding the hardness of approximability of MSVC and MSSC. 
It is NP-hard to approximate MSVC within a ratio better than $1+ \epsilon$ for some $\epsilon > 0$. 
For some $\rho < 4/3$ and every $d$, MSVC can be approximated within a ratio of $\rho$ on $d$-regular graphs, but such a result doesn't hold in the case of MSSC. 
For every $\epsilon > 0$, there exists $r, d$ such that it is NP-hard to approximate MSSC within a ratio better than $2 - \epsilon$ on $r$-uniform $d$-regular hypergraphs \cite{feige2004approximating}. 
Recently, it was proved that under the assumption of Unique Games Conjecture, MSVC can not be approximated within $1.014$ \cite{stankovic:LIPIcs.APPROX/RANDOM.2022.50}. 

Although MSVC is NP-hard, there are some classes of graphs for which it is polynomial time solvable. 
Gera et al. provide the cost of MSVC in polynomial time for complete bipartite graphs, biregular bipartite graphs, multi-stars, hypercubes, prisms, etc. \cite{gera2006results}. 
They also provide upper and lower bounds for the cost of MSVC in terms of independence number, the girth of a graph, and vertex cover number. 
MSVC is polynomial-time solvable for split graphs and caterpillars \cite{rasmussen2006efficient}. 
Interestingly, it is an open question whether MSVC for trees is polynomial-time solvable or NP-hard for a long time \cite{rasmussen2006efficient}.

MSSC is studied in the realm of online algorithms, too. 
A constructive deterministic upper bound of $O(r^{3/2} \cdot n)$, where $r$ is an upper bound for the cardinality of subsets, for online MSSC was given by Fotakis \cite{DBLP:conf/icalp/FotakisKKSV20}. 
This bound was then improved to $O(r^4)$ by Bienkowski and Mucha \cite{DBLP:conf/aaai/BienkowskiM23}. 
Though this bound removed the dependency on $n$, this was existential. Basiak et al. gave a constructive and improved bound of $O(r^2)$ \cite{basiak2023improved}. 
\paragraph*{Our Methods and Results.}
We study MSVC from the perspective of parameterized complexity. 
A natural parameter to consider for the FPT algorithm is the solution size. 
But, for any connected graph $G$ on $n$ vertices, and an optimal ordering $\phi$, $\mu_G(\phi) \geq n-1$. 
Hence, MSVC parameterized by the solution size is trivially in FPT. 
Many NP-hard problems are generally tractable when parameterized by treewidth. 
However, vertex ordering problems are an exception to that. 
For MSVC, it is even harder to consider the treewidth as a parameter because we do not know if MSVC on trees is polynomial-time solvable or not.

The vertex cover number is a parameter used to prove the tractability of many vertex ordering problems. 
As MSVC aims to minimize the sum of the cover time of all the edges, it is natural to consider the vertex cover number as the parameter.
For a vertex cover $S$ of size $k$, we define a relation on $I=V \setminus S$ such that two vertices of $I$ are related to each other if their neighborhood is the same. 
This is an equivalence relation that partitions $I$ into equivalence classes.
Interestingly, we prove that all vertices in the same equivalence class appear consecutively in an optimal ordering. 
So, we guess the relative ordering of vertices in $S$ and then guess the locations of equivalence classes. 
This gives an FPT algorithm for MSVC parameterized by vertex cover.

MSVC is polynomial-time solvable for a few classes of graphs, with complete graphs being one among them. 
In fact, any ordering of vertices of $K_n$ is an optimal ordering. 
A set $M \subset V(G)$ is called a \textit{clique modulator} of the graph $G$, if the graph induced on $V(G) \setminus M$ is a clique.
Finding an optimal solution for MSVC is non-trivial if $G$ has a  clique modulator of size $k$. 
Hence, we consider the size of a clique modulator as another parameter for MSVC. 
Vertex cover and clique modulator are complementary parameters. 
If a graph $G$ has a vertex cover of size $k$, then the same set of vertices is a clique modulator for the complement of graph $G$. 
Hence, these two parameters fit well together in understanding the complete picture of tractability. 
Although clique modulator is complementary parameter of vertex cover, finding an optimal solution of MSVC with clique modulator is more challenging than that of vertex cover. 
We use similar approach by defining the same neighborhood relation on clique vertices. 
Unfortunately, all the vertices of an equivalence class need not be consecutive here in any optimal ordering. But, we prove that there is an optimal ordering $\sigma$ with the following property. 
Let $\ell_1,\ldots,\ell_k$ be the location of the vertices from the modulator. Then, for any $i\in [k-1]$ and any equivalence class $A$, the vertices from $A$ in the locations $\ell_i+1,\ldots ,\ell_{i+1}-1$ are consecutive 
in $\sigma$. Also, for any equivalence class $A$, the vertices from $A$ in the locations $1,\ldots ,\ell_{1}-1$ as well as in the locations $\ell_{k}-1,\ldots,n$ are consecutive. 
We formulate an Integer Quadratic Programming for finding the number of vertices from each equivalence class present between two modulator vertices and provide an FPT algorithm. 
We summarize our results as follows:
\begin{itemize}
    \item MSVC can be solved in $[k! (k+1)^{2^k} + (1.2738)^k] n^{O(1)}$ time, 
   
     where $k$ is the size of a minimum vertex cover. 
    \item MSVC can be solved in $f(k)\cdot n^{O(1)}$ time for some computable function $f$, where $k$ is the size of a minimum clique modulator. 
\end{itemize}

\paragraph*{Other Related Works.}
Given a graph $G(V,E)$ on $n$ vertices and a bijection $\phi : V(G) \to [n]$, let a $cost(G, \phi)$ be defined on the vertex ordering $\phi$ of $G$. 
Vertex ordering problems are to minimize the $cost(G)$ over all possible orderings $\phi$. 
Based on the definition of the cost function, there are various 
vertex ordering problems studied in the literature. 

\textsc{Bandwidth} of a graph $G$, a vertex ordering problem, has been extensively studied since 1960. It has applications in speeding up many matrix computations of symmetric matrices. 
It is NP-complete  \cite{papadimitriou1976np}, even for many restricted classes of graphs \cite{garey1978complexity,monien1986bandwidth}. 
Approximating \textsc{Bandwidth} with a constant factor is NP-hard even for trees \cite{dubey2011hardness}, but it has an FPT approximation algorithm \cite{dregi2014parameterized}. It is W$[1]$-hard parameterized by cluster vertex deletion number \cite{gima2023bandwidth}.
Though \textsc{Bandwidth} parameterized by its solution size is in XP \cite{gurari1984improved}, it is fixed-parameter tractable parameterized by vertex cover number \cite{fellows2008graph}, and neighborhood diversity \cite{bakken2018arrangement}.

\textsc{Cutwidth} of a graph $G$, often referred to as the Min-Cut Linear Arrangement problem in the literature, is NP-complete \cite{gavril1977some}, even when restricted to graphs with maximum degree $3$ \cite{makedon1985topological}. 
However, it is polynomial-time solvable on trees \cite{yannakakis1985polynomial}.
For a given $k$, there is a linear time algorithm that outputs the cut if the \textsc{Cutwidth} of $G$ is at most $k$, else outputs that the graph has \textsc{Cutwidth} more than $k$ \cite{thilikos2005cutwidth}.
\textsc{Cutwidth} is FPT parameterized by vertex cover \cite{fellows2008graph}.

The \textsc{optimal linear arrangements} (OLA) problem came up in the minimization of error-correcting codes and was first studied on $n$-dimensional cubes \cite{harper1964optimal}. 
It is NP-complete \cite{garey1974some}, but polynomial-time solvable for trees \cite{chung1984optimal}. 
OLA parameterized above-guaranteed value (guaranteed value: $|E(G)|$) \cite{fernau2008parameterized,gutin2007linear}, vertex cover~\cite{lokshtanov2015parameterized} is in FPT.

These vertex ordering problems are studied in the field of parameterized algorithms and approximation algorithms. MSVC is well-explored in the realm of approximation algorithms. 
But to the best of our knowledge, it is not studied in the realm of parameterized complexity. 

\section{Preliminaries and Notations}
Let $G$ be a graph with $V(G)$ as the set of vertices and $E(G)$ as the set of edges. 
Let $\sigma$ be an ordering of $V(G)$ and $X \subseteq V(G)$. 
The ordering $\sigma$ restricted on $X$ is denoted by $\left. \sigma \right|_X$. It represents the relative ordering between the vertices of $X$. 
A bijection $\left. \sigma \right|_X: X \to \{1,2,\ldots, |X|\}$ is defined such that for all  $u,v \in X$, 
$$\sigma(u) < \sigma(v) \text{ if and only if } \left. \sigma \right|_X(u) < \left. \sigma \right|_X(v).$$

For an ordering $\phi$, we represent $\phi(u) < \phi(v)$ as $ u \prec_{\phi} v$ as well.  
We omit the subscript $\phi$ when the ordering is clear. 
Let $X$ and $Y$ be disjoint subsets of $V(G)$, such that in an ordering $\phi$, $u \prec v$, for all $u \in X, v \in Y$. If the vertices of $X$ are consecutive in $\phi$; likewise, the vertices of $Y$ are also consecutive, though there could be vertices between $X$ and $Y$, we represent this as $\phi(X) \prec \phi(Y)$. 

\begin{definition}
    The right degree of a vertex $v$ in an ordering $\phi$ is defined as 
     $rd_{\phi}(v) = |\{u \in N_G(v): \phi(u) > \phi(v) \} |.$
\end{definition}

\begin{lemma}\label{l1}
   In an optimal ordering $\phi$, the sequence of right degrees of vertices is non-increasing. 
\end{lemma}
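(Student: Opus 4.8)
The plan is to prove this by a local exchange (swap) argument, using a convenient reformulation of the cost. First I would record the identity
$\mu_G(\phi) = \sum_{i=1}^{n} i \cdot rd_\phi(\phi^{-1}(i))$,
which holds because every edge $\{u,v\}$ with $\phi(u) < \phi(v)$ contributes exactly $\phi(u) = \min\{\phi(u),\phi(v)\}$ to the cost, and this edge is counted precisely once among the right-neighbors of its earlier endpoint $u$; summing $i \cdot rd_\phi(\phi^{-1}(i))$ over all positions therefore charges each edge to its earlier endpoint at that endpoint's position. This makes explicit that the right-degree sequence governs the cost, and suggests that a high right degree appearing late is wasteful.

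Next I would assume, for contradiction, that the right-degree sequence $\bigl(rd_\phi(\phi^{-1}(1)),\ldots,rd_\phi(\phi^{-1}(n))\bigr)$ is not non-increasing. Since a sequence is non-increasing if and only if every adjacent pair is, there must exist consecutive positions $i$ and $i+1$ with $rd_\phi(a) < rd_\phi(b)$, where $a = \phi^{-1}(i)$ and $b = \phi^{-1}(i+1)$. I would then consider the ordering $\phi'$ obtained from $\phi$ by swapping $a$ and $b$ (so $\phi'(a)=i+1$, $\phi'(b)=i$, and all other positions unchanged) and compute $\Delta = \mu_G(\phi') - \mu_G(\phi)$.

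The core computation is that only edges incident to $a$ or $b$ can change their contribution. Writing $r_a^{+} = |\{x \in N_G(a) : \phi(x) > i+1\}|$ and $r_b^{+} = |\{x \in N_G(b) : \phi(x) > i+1\}|$, each neighbor of $a$ strictly after position $i+1$ sees $a$ move from position $i$ to $i+1$, raising its edge's weight by $1$, while each neighbor of $b$ strictly after position $i+1$ sees $b$ move from $i+1$ to $i$, lowering its edge's weight by $1$; neighbors before position $i$ are unaffected. Thus $\Delta = r_a^{+} - r_b^{+}$. The delicate point—which I expect to be the main thing to get right rather than a deep obstacle—is reconciling $r_a^{+}, r_b^{+}$ with the right degrees: the vertex $b$ sits at position $i+1 > i$, so it is itself a right-neighbor of $a$ when $ab \in E(G)$, giving $rd_\phi(a) = r_a^{+} + [\,ab \in E(G)\,]$, whereas $a$ lies before $b$ and so $rd_\phi(b) = r_b^{+}$. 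The assumed inversion $rd_\phi(a) < rd_\phi(b)$ then reads $r_a^{+} + [\,ab \in E(G)\,] < r_b^{+}$, which forces $r_a^{+} < r_b^{+}$ and hence $\Delta < 0$, contradicting the optimality of $\phi$. This completes the argument, with the edge $\{a,b\}$ itself contributing weight $i$ both before and after the swap and so playing no role in $\Delta$.
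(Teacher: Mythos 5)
Your proposal is correct and follows essentially the same route as the paper: both locate an adjacent inversion in the right-degree sequence, swap the two vertices, and show the cost strictly decreases, contradicting optimality. The only difference is presentational---you fold the paper's two cases (adjacent vs.\ non-adjacent pair) into one computation via the counts $r_a^{+}, r_b^{+}$ and the indicator $[\,ab\in E(G)\,]$, which yields exactly the same quantity $\Delta = r_a^{+}-r_b^{+}$ that the paper obtains case by case.
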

\begin{proof}
    For the sake of contradiction, assume the sequence of right degrees of vertices in an optimal ordering $\phi$ is not non-increasing. 
  Then there are consecutive locations, $i$ and $i+1$, with $\phi(u)=i$, and $\phi(v)=i+1$ such that $rd_{\phi}(u) < rd_{\phi}(v)$. 
Then, $u$ contributes $i \cdot rd_{\phi}(u)$, and $v$ contributes $(i+1) \cdot rd_{\phi}(v)$ in $\mu_G(\phi)$. 
  Swap the location of $u$ and $v$ to get another ordering $\phi^*$. \\
  Case (i): $u$ is adjacent to $v$.\\
$rd_{\phi^*}(u) = rd_{\phi}(u)-1$, and $rd_{\phi^*}(v)= rd_{\phi}(v)+1$. 
Then, $v$ contributes $i \cdot rd_{\phi^*}(v)$, and $u$ contributes $(i+1) \cdot rd_{\phi^*}(u)$ in $\mu_G(\phi^*)$.
\begin{equation*}
    \begin{split}
        \mu_G(\phi^*) -\mu_G(\phi) & = (i+1) 
\cdot (rd_{\phi}(u)-1) + (i) \cdot (rd_{\phi}(v)+1) \\
& - (i) \cdot rd_{\phi}(u) - (i+1) \cdot rd_{\phi}(v) \\
        & =  rd_{\phi}(u) - rd_{\phi}(v) - 1 \\
        & < 0 \hspace{4cm}(\text{since  } rd_{\phi}(u) < rd_{\phi}(v))
    \end{split}
\end{equation*}
  Case (ii): $u$ is not adjacent to $v$.\\
  $rd_{\phi^*}(u) = rd_{\phi}(u)$, and $rd_{\phi^*}(v)= rd_{\phi}(v)$. Then, $v$ contributes $i \cdot rd_{\phi^*}(v)$, and $u$ contributes $(i+1) \cdot rd_{\phi^*}(u)$ in $\mu_G(\phi^*)$.

\begin{equation*}
    \begin{split}
        \mu_G(\phi^*) -\mu_G(\phi) & = (i+1) \cdot rd_{\phi}(u) + (i) \cdot rd_{\phi}(v) \\ 
        & - (i) \cdot rd_{\phi}(u) - (i+1) \cdot rd_{\phi}(v) \\
        & =  rd_{\phi}(u) - rd_{\phi}(v) \\
        & < 0 \hspace{4cm} (\text{since  } rd_{\phi}(u) < rd_{\phi}(v))
    \end{split}
\end{equation*}
  
In both cases, we found that $\mu_G(\phi^*) < \mu_G(\phi)$; which is a contradiction to the assumption that $\phi$ is an optimal ordering. 
Hence, the right degree sequence in any optimal ordering is non-increasing. \qed
\end{proof}

\begin{lemma}\label{l2}
    In an ordering $\phi$ of a graph $G$, locations of two consecutive non-adjacent vertices of equal right degrees can be swapped to get a new ordering with the same cost. 
\end{lemma}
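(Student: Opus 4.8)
The plan is to reuse, almost verbatim, the local exchange argument from Case (ii) of Lemma~\ref{l1}, but now under the weaker hypothesis of \emph{equal} right degrees instead of a strict inequality, and to track the cost difference rather than bound it. Concretely, let $u,v$ be the two consecutive non-adjacent vertices, with $\phi(u)=i$ and $\phi(v)=i+1$, and let $\phi^*$ be the ordering obtained from $\phi$ by exchanging the positions of $u$ and $v$ (so $\phi^*(v)=i$, $\phi^*(u)=i+1$, and every other vertex keeps its position).

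The key step is to verify that the swap changes the right degrees of \emph{no} vertex. For any vertex at a position strictly smaller than $i$, the set of vertices lying to its right is unchanged (both $u$ and $v$ remain at positions $\ge i$), and similarly for any vertex at a position strictly greater than $i+1$; hence their right degrees are untouched. For $u$ and $v$ themselves, moving across position $i{+}1$ versus $i$ can only gain or lose the vertex occupying the adjacent slot, namely $v$ for $u$ and $u$ for $v$; since $u$ and $v$ are non-adjacent, neither counts toward the other's right degree, so $rd_{\phi^*}(u)=rd_{\phi}(u)$ and $rd_{\phi^*}(v)=rd_{\phi}(v)$. This is exactly the invariance already recorded in Case (ii) of Lemma~\ref{l1}.

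With the right degrees fixed, the cost difference is concentrated entirely in the contributions of $u$ and $v$, each vertex $w$ contributing $\phi(w)\cdot rd_{\phi}(w)$ to $\mu_G(\phi)$. Carrying out the same computation as in Lemma~\ref{l1}, Case (ii), gives
\begin{equation*}
\mu_G(\phi^*)-\mu_G(\phi) = (i+1)\cdot rd_{\phi}(u) + i\cdot rd_{\phi}(v) - i\cdot rd_{\phi}(u) - (i+1)\cdot rd_{\phi}(v) = rd_{\phi}(u)-rd_{\phi}(v).
\end{equation*}
Under the hypothesis $rd_{\phi}(u)=rd_{\phi}(v)$ this difference is $0$, so $\mu_G(\phi^*)=\mu_G(\phi)$, which is the claim.

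There is no real obstacle here; the statement is essentially the equality case of the strict-inequality argument already proved. The only point demanding a sentence of care is the bookkeeping that no third vertex's contribution shifts under the swap, i.e. that the exchange is genuinely \emph{local}; once that invariance is in hand the cost identity follows by a one-line telescoping computation.
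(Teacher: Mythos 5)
Your proof is correct and follows essentially the same route as the paper: a direct computation of the local swap, using non-adjacency to argue that both vertices' right degrees (and those of all other vertices) are unchanged, so the combined contribution of the two swapped vertices is symmetric and the cost is preserved. Your version just makes explicit the bookkeeping that the paper leaves implicit, namely that no third vertex's contribution moves.
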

\begin{proof}
    
     Let $u$ and $v$ be two vertices at location $i$ and $i+1$ in $\phi$ with $d$ as their right degree. 
     As $u$ and $v$ are non-adjacent, the $d$ neighbors of both $u$ and $v$ are after $i+1$ in $\phi$. 
     The vertex $u$ contributes $i \cdot d$ and $v$ contributes $(i+1) \cdot d$ in $\mu_G(\phi)$. 
     If the locations of $u$ and $v$ are swapped, the total contribution by both the vertices is still $d \cdot (2i+1)$. 
     Hence, the cost of the new ordering is the same as $\mu_G(\phi)$. \qed
 
\end{proof}
The input to an Integer Quadratic Programming (IQP) problem is an $n\times n$ integer matrix $Q$, an $m\times n$ integer matrix $A$,  and an $m$-dimensional integer vector $b$. 
The task is to find a vector $x\in {\mathbb Z}^+$ such that $x^{T}Qx$ is minimized subject to the constraints  $Ax\leq b$. Let $L$ be the total number of bits required to encode the input IQP.

\begin{proposition}[Lokshtanov~\cite{lokshtanov2015parameterized}]
\label{iqp}
There exists an algorithm that, given an instance of Integer Quadratic Programming, runs in time $f(n, \alpha)L^{O(1)}$, and determines whether the instance is infeasible, feasible and unbounded, or feasible and bounded. If the instance is feasible and bounded, the algorithm outputs an optimal solution.
Here $\alpha$ is the largest absolute value of an entry of $Q$ and $A$. 
\end{proposition}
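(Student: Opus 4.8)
The plan is to establish this by reducing the quadratic optimization to a number of integer \emph{linear} programs that is bounded in terms of $n$ and $\alpha$ only, each of which is solved by Lenstra's algorithm for Integer Linear Programming in fixed dimension (running in time $g(n)L^{O(1)}$). First I would dispose of the structural questions. Feasibility of the system $Ax \le b$ is decided directly by Lenstra's algorithm. For boundedness I would analyse the form $r^{T}Qr$ over the finitely generated, rationally described recession cone of the polytope $\{x : Ax \le b\}$: writing $x = x_0 + tr$ for a recession ray $r$, the objective behaves like $t^2\, r^{T}Qr + 2t\, x_0^{T}Qr + \text{const}$, so it is unbounded from below precisely when the form can be driven to $-\infty$ along some recession direction, a condition testable from $Q$ and the cone generators.

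The core difficulty is that $Q$ need not be positive semidefinite, so $x^{T}Qx$ is in general neither convex nor concave. My approach would be to diagonalise the form by a rational change of basis and separate the coordinates according to the sign of the corresponding eigenvalue. Since every entry of $Q$ has absolute value at most $\alpha$, all eigenvalues lie in $[-n\alpha,\, n\alpha]$, and this coefficient bound is exactly what keeps the subsequent enumeration finite in the parameters. Along the \emph{convex} directions (positive eigenvalues) the objective grows quadratically away from the unconstrained continuous minimiser, so an optimal integer point is confined to a region containing at most $h(n,\alpha)$ integer points; I would enumerate these.

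The hard part is the \emph{concave} and flat directions (non-positive eigenvalues), because minimising a concave quadratic over a polytope attains its optimum on the boundary, and a polytope may have exponentially many vertices, so naive vertex enumeration is not FPT. Here I would combine the coefficient bound with a Lenstra-style recursion on the dimension: fixing the integer values in the bounded-range concave coordinates (the relevant range being controlled by $\alpha$ together with the constraints), or equivalently branching on which constraints are tight according to a flatness/lattice-width argument, collapses the instance into a residual integer linear program on fewer variables. Iterating, the whole computation branches into at most $f(n,\alpha)$ leaves, each an ILP solvable by Lenstra in $g(n)L^{O(1)}$ time.

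Finally I would combine the pieces: over all $f(n,\alpha)$ guesses of the convex-direction coordinates and concave-direction case distinctions, solve the residual ILP, evaluate $x^{T}Qx$ at each candidate, and return the best; the boundedness test guarantees a finite optimum exists, and the exhaustive branching certifies its optimality. The total running time is $f(n,\alpha)\cdot g(n)\cdot L^{O(1)}$, matching the required form $f(n,\alpha)L^{O(1)}$. I expect the genuinely delicate step to be the concave-direction recursion, since that is where both parameters — the dimension $n$ and the maximum coefficient $\alpha$ — must be used \emph{simultaneously} to avoid the exponential blow-up inherent in minimising a non-convex objective over integer points.
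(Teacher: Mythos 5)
First, a point of order: the paper does not prove Proposition~\ref{iqp} at all --- it is imported as a black box from Lokshtanov~\cite{lokshtanov2015parameterized}, where its proof is the entire technical content of that paper. So your proposal must be judged as an attempt at Lokshtanov's theorem itself, and as such it has genuine gaps at exactly the places where that theorem is hard. The first gap is the diagonalization step: a rational congruence $T^{T}QT=D$ always exists, but the change of variables $x=Ty$ maps $\mathbb{Z}^n$ to the lattice $T^{-1}\mathbb{Z}^n$, which is not $\mathbb{Z}^n$ and, crucially, does not factor as a product of lattices supported on the positive-, negative-, and zero-eigenvalue subspaces; integrality couples the ``convex'' and ``concave'' coordinates, so they cannot be enumerated or optimized independently. (Unimodular maps do preserve the lattice but cannot in general diagonalize $Q$; this incompatibility is precisely why the published proof does not diagonalize.) Second, even granting the decomposition, the convex-direction step fails in the presence of constraints: the optimal \emph{feasible} integer point need not lie anywhere near the unconstrained continuous minimizer (minimize $x^2$ subject to $x\geq N$), so it is not confined to a region with $h(n,\alpha)$ integer points; even the positive semidefinite case requires Lenstra/Khachiyan--Porkolab-style convex integer minimization rather than local enumeration.

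The third and decisive gap is the concave part, which you yourself flag as ``the genuinely delicate step'' but for which the proposal supplies no argument. Your claim that the relevant range of a concave coordinate is ``controlled by $\alpha$ together with the constraints'' is false as stated: that range is governed by the right-hand side $b$, whose entries are not bounded by any function of $n$ and $\alpha$, so branching on the values of such a coordinate is not an $f(n,\alpha)$-way branch. The actual proof needs a structural lemma roughly of the following form: every bounded instance has an optimal solution that either sits within an $f(n,\alpha)$-bounded distance of a suitable subset of constraint hyperplanes --- whose near-tight slack values can then be guessed and substituted, reducing the dimension and recursing --- or lies in a region where the form admits a convex-type treatment; establishing this is where the bound $\alpha$ on the entries of $Q$ and $A$ is really used, and nothing in your sketch substitutes for it. A smaller but real defect of the same kind: your boundedness test ``from $Q$ and the cone generators'' is insufficient, since an indefinite form can be nonnegative on every generator of the recession cone yet negative on a conic combination (take the cone spanned by $(1,0)$ and $(0,1)$ with $x^{T}Qx=-2x_1x_2$), so deciding unboundedness is itself a nontrivial quadratic problem over the cone, not a finite check on generators.
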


\section{MSVC Parameterized by Vertex Cover}
Let $G$ be a graph with the size of a minimum vertex cover of $G$ at most $k$, where $k$ is a positive integer. 
Placing the vertices of a minimum vertex cover first in the ordering appears to be an appealing approach to achieve MSVC. 
However, graph $G_2$ in Figure~\ref{g2} shows this isn't always optimal. 
The red-colored vertices form the minimum vertex cover of $G_2$.
Consider an ordering $\psi$ with the first eight locations having these vertices. 
The order among these vertices doesn't matter, as each vertex covers exactly two unique edges (each edge is covered by exactly one vertex of vertex cover). 
And the remaining vertices of the graph are located $9^{th}$ location onwards, in any order. 
The cost of such an ordering is $72$. 
Consider an ordering $\phi$, such that $\phi(v_1)=1$, and  $v_1$ is followed by $v_6, v_9, v_{12}, v_{15}, v_7, v_{10}, v_{13}, v_{16}$ respectively. 
The remaining vertices are ordered in any way after $v_{16}$. 
The cost of $\phi$ is $62$. 
So, positioning the vertices of minimum vertex cover of size $k$, in the first $k$ locations, doesn't always yield an optimal solution. 
Hence, a careful analysis is required for designing an FPT algorithm. 

Let $S = \{v_1, v_2, \dots , v_k\}$ be a minimum vertex cover of graph $G$. Then, $I = V (G) \setminus S$ is an independent set.
 And $N_S(u)=\{v \in S | \{u, v\} \in  E(G) \} $.
There are total $k!$ relative orderings of vertices of $S$. 
Consider one such ordering, $v_{i_1} \prec v_{i_2} \prec \dots \prec v_{i_k}$, where $i_1, i_2, \dots i_k$ is a permutation of $[k]$. 
The set of vertices of $I$ that appear before $v_{i_1}$ is represented by \textit{Block $1$}. 
The set of vertices between $v_{i_j}$ and $v_{i_{j+1}}$ is represented as \textit{Block $j+1$}, for $j \in [k-1]$. 
And the set of vertices after $v_{i_k}$ is called \textit{Block $k+1$}. This depiction is shown in Figure \ref{fig:block}.
\begin{figure}[h!]
        \centering
        \includegraphics[width=\textwidth]{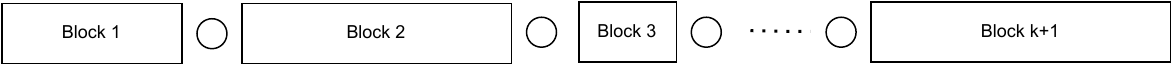}
        \caption{$k+1$ blocks in an ordering of $V(G)$}
        \label{fig:block}
    \end{figure}
In an optimal ordering, a vertex from $I$ can be in any one of the $k+1$ blocks. 
Consider the relation $R$ on $I$ as follows:
$uRw \text{  if  } N_S (u) = N_S(w)$.
This equivalence relation $R$ partitions $I$ into at most $2^k$ equivalence classes. 
 
\begin{lemma}\label{l3}
  There is an optimal ordering $\phi$ such that, in every block, all the vertices from the same equivalence class are consecutive.   
\end{lemma}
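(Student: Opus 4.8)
The plan is to start from an optimal ordering $\phi$ and rearrange the vertices of $I$ inside each block, one equivalence class at a time, using only cost-preserving adjacent swaps. By Lemma~\ref{l1} I may assume the right-degree sequence of $\phi$ is non-increasing. The key structural observation is this: for a vertex $u \in I$ sitting in block $j+1$ (between $v_{i_j}$ and $v_{i_{j+1}}$), every neighbour of $u$ lies in $S$, and the $S$-vertices appearing after $u$ in $\phi$ are exactly $v_{i_{j+1}}, v_{i_{j+2}}, \dots, v_{i_k}$. Hence $rd_{\phi}(u) = |N_S(u) \cap \{v_{i_{j+1}}, \dots, v_{i_k}\}|$, a quantity that depends only on $N_S(u)$ and on which block $u$ occupies. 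In particular, any two vertices of the same equivalence class that lie in the same block have identical right degree.

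Next I would analyse the internal structure of a single block. A block is a contiguous range of positions lying strictly between two consecutive $S$-vertices, and all vertices inside it belong to $I$, so they are pairwise non-adjacent. Restricting the globally non-increasing right-degree sequence of $\phi$ to this range, the right degrees are still non-increasing; therefore the vertices of any fixed right-degree value $d$ occupy a contiguous sub-run of the block. Combined with the observation above, each equivalence class present in the block has a single right-degree value there, so it is entirely contained in one such sub-run.

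It then remains to reorder the vertices inside each sub-run so that equal classes become consecutive. Within a sub-run all vertices are pairwise non-adjacent and share the same right degree $d$, so Lemma~\ref{l2} lets me swap any two consecutive ones without changing the cost. I would also verify the easy fact that swapping two non-adjacent $I$-vertices that stay inside the same block leaves every vertex's right degree unchanged (no $S$-vertex is crossed, and the two swapped vertices are non-adjacent to each other). This means that after each swap all sub-run vertices still have right degree $d$, so Lemma~\ref{l2} remains applicable at the next step. Since adjacent transpositions generate every permutation of a sub-run, I can sort it by equivalence class at no cost, grouping each class into a consecutive segment. Performing this in every sub-run and every block produces an optimal ordering with the desired property.

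I expect the main obstacle to be purely in the bookkeeping of right degrees across a sequence of swaps: Lemma~\ref{l2} is stated for a single swap of equal-right-degree vertices, and to chain many swaps I must ensure the equal-right-degree condition is re-established after each one. The invariance argument for swapping two $I$-vertices inside a block is precisely what secures this, and once it is in place the remainder is routine.
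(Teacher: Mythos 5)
Your proof is correct and takes essentially the same approach as the paper's: both start from an optimal ordering satisfying Lemma~\ref{l1}, observe that same-class vertices in the same block share a right degree so that (by monotonicity) all vertices separating them have that same right degree, and then apply Lemma~\ref{l2} repeatedly to group the class consecutively at no cost. Your explicit check that swapping two non-adjacent $I$-vertices within a block preserves all right degrees (so Lemma~\ref{l2} can be chained) is a detail the paper leaves implicit, but the argument is the same.
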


\begin{proof}
    Consider an optimal solution $\sigma$. It satisfies Lemma \ref{l1}.  
Consider a block in which vertices from an equivalence class $A$ are not consecutive.  
Let the vertices be at locations up to $i$ and after $j$. 
All the vertices in the block are independent. 
Let $d$ be the right degree of vertices from $A$. 
Then, for all the vertices between the locations $i$ and $j$, the right degree is $d$, by Lemma \ref{l1}. 
The vertices can be shifted to make them consecutive without affecting the cost by repeated application of Lemma \ref{l2}. 
This process can be applied for each equivalence class in each block to get another optimal ordering $\phi$ with the required property.\qed
\end{proof}

\begin{lemma}\label{t1}
Let $S$ be a minimum vertex cover of $G$. There is an optimal ordering such that for every equivalence class $A$, all the vertices from $A$ are consecutive. 

\end{lemma}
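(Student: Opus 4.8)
The plan is to strengthen Lemma~\ref{l3}: starting from an optimal ordering in which each class is consecutive \emph{within} every block, I will show that every class can in fact be pushed into a \emph{single} block without increasing the cost. Since ``all of $A$ lies in one block and is consecutive inside it'' is the same as ``all of $A$ is consecutive'', combining such a consolidation with one final application of Lemma~\ref{l3} yields the statement.

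Fix an optimal ordering $\phi$ satisfying Lemma~\ref{l3} and a relative order $v_{i_1}\prec\dots\prec v_{i_k}$ of $S$, giving the $k+1$ blocks. Fix one equivalence class $A$ with common neighbourhood $N=N_S(A)$. An $A$-vertex placed in block $b$ has exactly the $N$-vertices lying in blocks $\ge b$ to its right, so its right degree equals $f(b):=|N\cap\{v_{i_b},\dots,v_{i_k}\}|$; note $f$ is non-increasing in $b$, consistent with Lemma~\ref{l1}. Now freeze the positions of all non-$A$ vertices and regard the cost as a function of the profile $(n_1,\dots,n_{k+1})$, where $n_b$ is the number of $A$-vertices in block $b$ (placed, by Lemma~\ref{l1}, in the contiguous right-degree-$f(b)$ slot of block $b$). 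Expanding $\mu_G=\sum_e\min(\phi(u),\phi(v))$, the edges not incident to $A$ contribute a term that is \emph{linear} in $(n_b)$ (each $A$-vertex in block $b$ merely shifts, by one, every non-$A$ earlier endpoint lying to its right), while the $A$--$N$ edges contribute a quadratic term that I will write out explicitly.

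The crux is the following local computation. Restrict to two occupied blocks $b_1<b_2$, keep $n_{b_1}+n_{b_2}$ and all other $n_c$ fixed, and view the cost as a function of $n_{b_1}$. Its quadratic part is $\tfrac{f(b_1)}{2}n_{b_1}^2+\tfrac{f(b_2)}{2}n_{b_2}^2+f(b_1)\,n_{b_1}n_{b_2}$; here the cross coefficient $f(b_1)$ arises as $f(b_2)+\bigl(f(b_1)-f(b_2)\bigr)$, the first summand because shifting $A$-mass into $b_1$ pushes the $A$-vertices of $b_2$ (each carrying $f(b_2)$ right-edges) one step right, and the second because it also pushes the $f(b_1)-f(b_2)$ separating $N$-vertices right, raising the cost of the edges from $b_2$'s $A$-vertices to them. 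Substituting $n_{b_2}=\text{const}-n_{b_1}$, the coefficient of $n_{b_1}^2$ becomes $\tfrac12\bigl(f(b_2)-f(b_1)\bigr)\le 0$, so the cost is a concave function of $n_{b_1}$ and is therefore minimised at an endpoint: moving all $A$-vertices of $b_1,b_2$ into one of the two blocks does not increase the cost.

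Since $\phi$ is optimal, this endpoint ordering is again optimal, and it meets $A$ in one fewer block; iterating over pairs of occupied blocks places all of $A$ in a single block. I then process the classes one at a time --- consolidating a class only moves its own vertices and hence does not change which block any other class lives in --- and finish with one application of Lemma~\ref{l3}. The main obstacle is exactly the quadratic computation above, and in particular verifying that the two separate position-shift effects combine to give cross coefficient precisely $f(b_1)$, which is what forces the leading coefficient to be non-positive and drives the whole argument; handling the within-block slot offsets (so that they only contribute to the harmless linear part) is the routine-but-necessary bookkeeping.
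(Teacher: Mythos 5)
Your proof is correct, and it reaches the statement by a genuinely different organization, though it rests on the same underlying exchange phenomenon as the paper. The paper argues by contradiction: among optimal orderings satisfying Lemma~\ref{l3}, it picks one minimizing a potential $Q$ (the total number of blocks meeting each equivalence class), isolates two consecutive chunks $U$ and $W$ of a split class separated by a segment $X$ containing $p$ class-neighbours from $S$, proves $p\geq 1$ (the $p=0$ case is dispatched separately via Lemma~\ref{l2} and minimality of $Q$), and then explicitly computes the cost changes of two wholesale moves --- shifting $W$ left to join $U$, or shifting $U$ right past $X$ to join $W$ --- showing these changes cannot both be nonnegative, since their suitably scaled sum is $-p(r+t)/2<0$. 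Your two endpoints $n_{b_1}\in\{0,T\}$ are exactly those two moves, your concavity coefficient $\tfrac{1}{2}\bigl(f(b_2)-f(b_1)\bigr)$ is exactly the paper's $-p/2$, and your cross-coefficient computation (the decomposition $f(b_1)=f(b_2)+\bigl(f(b_1)-f(b_2)\bigr)$, which checks out: the $A$-vertices of $b_2$ carry $f(b_2)$ right-edges each, and the $f(b_1)-f(b_2)$ separating $N$-vertices each carry $n_{b_2}$ edges into $b_2$'s $A$-vertices) replaces the paper's explicit delta computations. What your framing buys: it is constructive (iterated consolidation rather than a minimal-counterexample contradiction); it absorbs the degenerate case $f(b_1)=f(b_2)$ uniformly, since the restricted cost is then linear and still endpoint-minimized, where the paper needs the separate $p=0$ argument; and treating the cost as a concave quadratic in the occupation profile foreshadows the quadratic-in-counts viewpoint the paper itself adopts for the IQP in the clique-modulator section. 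What the paper's route buys: it avoids setting up the global quadratic form and the attendant well-definedness bookkeeping (that the profile cost is independent of slot placement among classes of equal right degree, justified by Lemma~\ref{l2}), which your write-up correctly flags as necessary but does not carry out in full.
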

\begin{proof}
To prove the lemma, it is enough to show that there is an optimal ordering $\sigma$ such that the following conditions are satisfied:
\begin{romanenumerate}
    \item Each equivalence class is contained in exactly one block. 
    \item For every block, all the vertices from the same equivalence class are consecutive.
\end{romanenumerate} 
Given an ordering $\phi$, let $f_{\phi}$ be a function defined on the equivalence classes as
\begin{center}
    $f_{\phi}(A)$: the number of blocks that contain vertices from equivalence class $A$.
\end{center}
Notice that $f_{\phi}(A) \geq 1, $ for all $A$. Let 
$Q(\phi) =  \sum_A f_{\phi}(A)$. 

Any optimal ordering can be modified to satisfy $(ii)$ by Lemma \ref{l3}, with the cost being unaffected.  
Let $\alpha$ be the number of equivalence classes formed by the relation $R$ on $I = V(G) \setminus S$.
We only need to prove there is an optimal ordering $\sigma$, satisfying $(ii)$, with $Q(\sigma) = \alpha$. 
For contradiction, assume in every optimal solution satisfying $(ii)$, there is at least one equivalence class whose vertices are not entirely contained in a single block. 
Consider one with the least value of $Q$, say $\phi$, among all the optimal orderings, satisfying $(ii)$.  
At least one equivalence class $A$ exists whose vertices are not in one block. 

\begin{figure}[t]
        \centering
        \includegraphics[width=\textwidth]{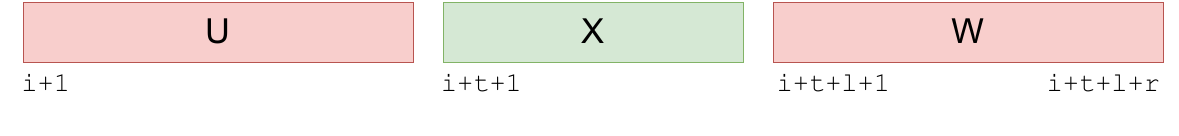}
        \caption{Vertices of equivalence class $A$ distributed in different blocks}
        \label{fig:fig1}
    \end{figure}

\begin{figure}[t]
    \centering
    \begin{subfigure}{0.49\textwidth}
        \centering
        \includegraphics[scale=0.3]{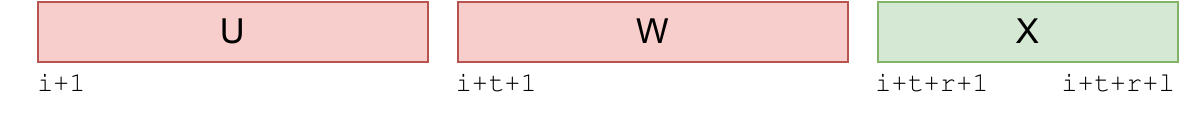}
        \caption{Ordering $\psi_1$}
        \label{uwx}
    \end{subfigure}
\begin{subfigure}{0.49\textwidth}
        \centering
        \includegraphics[scale=0.3]{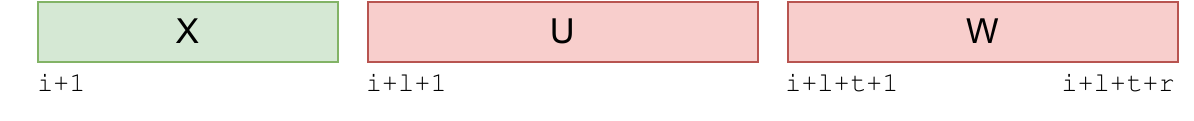}
        \caption{Ordering $\psi_2$}
        \label{xuw}
    \end{subfigure}
    \caption{Cases of Lemma 4}
    \label{}
\end{figure}

Let $U= \{u_1, u_2, \dots , u_t \}$ be a subset of $A$, with $\phi(u_1)=i+1, \phi(u_2)=i+2, \dots , \phi(u_t)=i+t$, where $t \geq 1$ and $i \geq 0$. 
And $\phi^{-1}(i+t+1) \notin A$. 
Let $X=\{\phi^{-1}(i+t+1), \dots , \phi^{-1}(i+t+ \ell  )\}$ be such that $X \cap S \neq \emptyset$. 
Let us denote these vertices in $X$ as $x_1, x_2, \dots , x_{\ell}$. 
Let $W= \{w_1, w_2, \dots ,w_r\}$ be the first appearance of vertices from $A$ after $U$, with $\phi(w_1)=i+t+\ell +1, \phi(w_2)= i+t+\ell +2, \dots , \phi(w_t)=i+t+\ell +r$, with $r \geq 1$, as shown in Figure \ref{fig:fig1}. 

Let $v_1, v_2, \dots ,v_p$ be the vertices from $S$, adjacent to $A$, such that $\phi(v_1)=i_1, \phi(v_2)=i_2, \dots , \phi(v_p)=i_p$, where $i+t < i_1 < i_2 < \dots < i_p \leq i+t+ \ell   $. 
There can be more vertices from $S$ in $X$ that are not adjacent to $A$, but we do not care about their location.  Let the vertices in $W$ have the right degree $d$. Then the vertices in $U$ have the right degree $d+p$.

We claim that $ p \geq 1$. 
If $p=0$, then the right degree of vertices in $U$ and $W$ is $d$. 
And by Lemma \ref{l1}, all the vertices between $U$ and $W$ have the right degree $d$. 
Hence, by repeated application of Lemma \ref{l2}, $U$ can be shifted to $W$ or $W$ to $U$ without affecting the cost and minimizing the value of $Q(\phi)$, a contradiction. 

\medskip
\noindent
    {\bf Case(i): Consider a new ordering $\psi_1$, where the entire set $W$ is shifted next to $U$.} 
The weight for all the edges incident on vertices up to $U$ and after $W$, from $\phi$ will remain unchanged. 
The weight changes for edges incident on $X$ and $W$. 
The set $W$ shifts to the left by $\ell$ locations, and the set $X$ shifts to the right by $r$ locations, as shown in Figure \ref{uwx}.

We analyze the change in the cost because of the shift. Consider a vertex $w \in W$, with right degree $d$ and location $a$ in $\phi$. 
So, $w$ contributes $a \cdot d$ in the cost of $\phi$. 
Because of the shift, vertex $w$ is shifted by exact $\ell$ locations towards the left. 
It now contributes $(a-\ell) \cdot d$ in the cost of $\psi$. 
The change in the cost because of $w$ is $-\ell \cdot d$. 
This holds true for each vertex in $W$. 
As there are $r$ vertices, each with the right degree $d$, the total cost change is $-\ell dr$. 

The degrees of all the vertices of $X$, which are not adjacent to $W$, remain unchanged by this shift. Let $x_i$ be one such vertex at location $b$. 
It contributes $b \cdot rd_{\phi}(x_i)$ in $\phi$. 
Because of the shift of the $r$ locations, it now contributes $(b+r) \cdot rd_{\phi}(x_i)$ in $\psi$. 
Hence, the change in the cost is $r \cdot rd_{\phi}(x_i)$. 
The increased cost due to all such vertices is 
$$\Bigg[\sum_{i \in [\ell] \setminus \{i_1, i_2, \dots i_p\}} rd_{\phi}(x_i)\Bigg] \cdot r.$$ 
A modulator vertex $v_p$ is adjacent to all the vertices of $W$. 
But $ w \prec_{\psi_1} v_p$. So, the right degree of $v_p$ decreases by $|W|=r$. 
And its location shifts by $r$ locations. 
As it is the same for all the vertices $v_1, \dots, v_p$, the change in the cost due to these vertices is $$\Bigg[\sum_{i \in \{i_1, \dots i_p\} } (rd_{\phi}(x_i)-r)\Bigg] \cdot r $$              
The weight of an edge $e=\{v_1,w_1\}$ is $i_1$ in $\phi$. 
But, $w_1 \prec_{\psi_1} v_1$, and $e$'s weight now is the  location of $w_1$ in $\psi_1$, i. e. $i+t+1$. Hence, the change in the weight is $(i+t+1-i_1)$. Similarly, for the edge $\{v_1,w_j\}$, for $j \in [r]$, change in its weight  is $(i+t+j-i_1)$. 
In the same manner, the change in the weight of the edges $\{v_2,w_j\}$ becomes $(i+t+j-i_2)$, for $j \in [r]$. And it follows a similar pattern for edges incident on $v_3, \dots , v_p$ as well. 
The weight of the rest of the edges remains unaffected. Hence, the total change in the cost is: $\mu_G(\psi_1) - \mu_G(\phi)=$ 
\vspace{-0.3cm}
\begin{equation*}
        \begin{split}
                 & -\ell dr + \Bigg[\sum_{i \notin \{i_1, i_2, \dots i_p\}, i=1 }^{\ell} rd_{\phi}(x_i)\Bigg] \cdot r  + \Bigg[\sum_{i \in \{i_1, \dots i_p\} } (rd_{\phi}(x_i)-r)\Bigg] \cdot r  \\
                 & + \sum_{j=1}^{r} (i+t+j-i_1) + \sum_{j=1}^{r} (i+t+j-i_2) + \dots+  \sum_{j=1}^{r} (i+t+j-i_p)\\
                &  = -\ell dr + r \cdot \sum_{i=1}^{\ell} rd_{\phi}(x_i) -pr^2 -r(i_1 +\dots i_p) 
             + p[(i + t+ 1) + \dots (i+t+r)] \\
                              &  = -\ell dr + r \cdot \sum_{i=1}^{\ell} rd_{\phi}(x_i) -pr^2 -r(i_1+  \dots +i_p)
                  + pir + ptr + pr(r+1)/2 \\
                & = r \cdot \sum_{i=1}^{l} rd_{\phi}(x_i) - r \cdot y -pr^2/2  \hspace{0.2cm}(\text{where, } y= \ell d + (i_1+  \dots i_p) -pi - pt -p/2)
                 \end{split}
\end{equation*}
So, we have, $ \mu_G(\psi_1) - \mu_G(\phi) = r \cdot \Bigg[\sum_{i=1}^{l} rd_{\phi}(x_i) -y -pr/2 \Bigg] $.  
\begin{equation}
    \text{As $r >0$, if }\sum_{i=1}^{l} rd_{\phi}(x_i) < y + pr/2, \text{  then } \mu_G(\psi_1) - \mu_G(\phi) < 0.
\end{equation}
\noindent
    {\bf Case(ii): Let $\psi_2$ be a new ordering obtained by swapping the positions of $U$ and $X$.} 
It is shown in Figure \ref{xuw}. Weights of the edges $\{u_j,v_1\}$, $j \in [t]$, change from $i+j$ to $i_1-t$.
Similarly, weights of the edges $\{u_j,v_q\}$, $j \in [t]$, $q \in[p]$, change from $i+j$ to $i_q-t$.
For the remaining $d$ edges incident on $U$, their weight increases by $\ell$. As there are $t$ vertices in $U$ each of degree $d$, the total increase cost is $\ell dt$. 
And for the remaining edges of $X$, their weights are reduced by the factor $t$. The total change in the cost function is as follows: $\mu_G(\psi_2) - \mu_G(\phi)=$
\begin{equation*}
    \begin{split}
         & \ell dt - \Big[\sum_{i=1}^{\ell} rd_{\phi}(x_i)
        \Big] \cdot t + \sum_{j=1}^{t} [i_1-t -(i+j)] \\ 
         &+ \sum_{j=1}^{t} [i_2-t -(i+j)] + \dots +  \sum_{j=1}^{t} [i_p-t -(i+j)] \\
        & = \ell dt - t  \sum_{i=1}^{l} rd_{\phi}(x_i) -pt^2 + t(i_1+ \dots +i_p) 
         -  p[(i + 1) + (i +2) + \dots + (i+t)] \\
        & = -t \cdot \sum_{i=1}^{l} rd_{\phi}(x_i) + ty -pt^2/2 \hspace{0.2cm}
      (\text{as, } y= \ell d + (i_1+ \dots +i_p) -pi - pt -p/2)
        \end{split}
\end{equation*}
We have $\mu_G(\psi_2) - \mu_G(\phi) = t \cdot \Big[-\sum_{i=1}^{\ell} rd_{\phi}(x_i) +y -pt/2 \Big]$.  
\begin{equation}
\text{As $t >0$, we have, if } \sum_{i=1}^{\ell} rd_{\phi}(x_i) > y - pt/2, 
 \text{ then } \mu_G(\psi_2) - \mu_G(\phi) < 0    
\end{equation}

Notice that $p, t, r > 0$, this implies $\sum_{i=1}^{l} rd_{\phi}(x_i) > y - pt/2$ or $\sum_{i=1}^{l} rd_{\phi}(x_i) < y + pr/2$ (or both). 
Hence, corresponding to the sum of the right degrees of vertices in $X$, shifting $U$ towards $W$ or vice versa gives an ordering with a cost less than the optimal value, a contradiction. 
Hence, our assumption that $Q$ is greater than the number of equivalence classes in every optimal ordering is wrong. 
\qed
\end{proof}
Let $\mathcal{S} = \{\sigma_1, \sigma_2, \dots \sigma_{k!}\}$ be the set of all permutations of vertices in $S$, and $A_1,A_2,\dots,A_q$ be all the equivalence classes. 
By Lemma \ref{t1}, each equivalence class is contained in exactly one block, and vertices from each equivalence class are consecutive.
With $k+1$ blocks and $q$ equivalence classes, there are $(k+1)^{q}$ possible choices. 
We denote a choice as $c_j$ for $j\in \{1,2,\dots,(k+1)^q\}$. Let $\mathcal{E}$ denote the set of all possible choices. A configuration $(\sigma_i, c_j)$  corresponds to the permutation $\sigma_i$ of $\mathcal{S}$, and choice $c_j$ of $\mathcal{E}$ for the equivalence classes, where $i \in [k!], j \in [(k+1)^q]$. 
An equivalence class, once fixed in one of the $k+1$ blocks, in a configuration, $(\sigma_i,c_j)$, its right degree can be easily calculated, as it is adjacent to only vertices from vertex cover. 
In a configuration $(\sigma_i,c_j)$, the relative order between the equivalence classes in any block is decided by their right degrees by Lemma \ref{l1}.
The equivalence classes are arranged such that their right degrees are non-increasing.
A configuration with minimum cost is an optimal ordering for MSVC.

It takes $ (1.2738)^k n^{O(1)}$ time to find a vertex cover of size at most $k$. Hence, the FPT running time of the algorithm is $[k! (k+1)^{2^k} + (1.2738)^k] n^{O(1)}$. 
Correctness follows from Lemma \ref{t1}. 

\section{MSVC Parameterized by Clique Modulator}\label{cm}
In this section, we prove that the MSVC parameterized by the clique modulator is fixed-parameter tractable. 

\begin{definition}
 A set $M \subset V(G)$ is called a clique modulator of the graph $G$, if the graph induced on $V(G) \setminus M$ is a clique. 
\end{definition}

The input of the problem is a positive integer $k$, and graph $G$ with the size of a minimum clique modulator at most $k$. 
The question is to find an optimal ordering for MSVC of $G$. 
We use Integer Quadratic Programming (IQP) to construct the solution for MSVC parameterized by the clique modulator. 
IQP can be solved in time $f(t, \alpha)n^{O(1)}$ time, where $f$ is some computable function, and $t$ is the number of variables in IQP, and $\alpha$ is an upper bound for the coefficients of variables in IQP \cite{lokshtanov2015parameterized}. 

Let $M= \{v_1, v_2, \dots ,v_k\}$ be a clique modulator of the graph $G$. Then, $Q = V(G) \setminus M$ is a clique. 
Consider the relation $R$ on $Q$ as follows: a vertex $u$ is related to $w$ if $N_M(u) = N_M(w)$. 
It is an equivalence relation, partitioning $Q$ into equivalence classes $A_1$, $A_2, \dots ,A_{\ell}$, where $\ell \leq 2^k$. 

\begin{figure}[h!]
    \centering
    \includegraphics[scale=0.4]{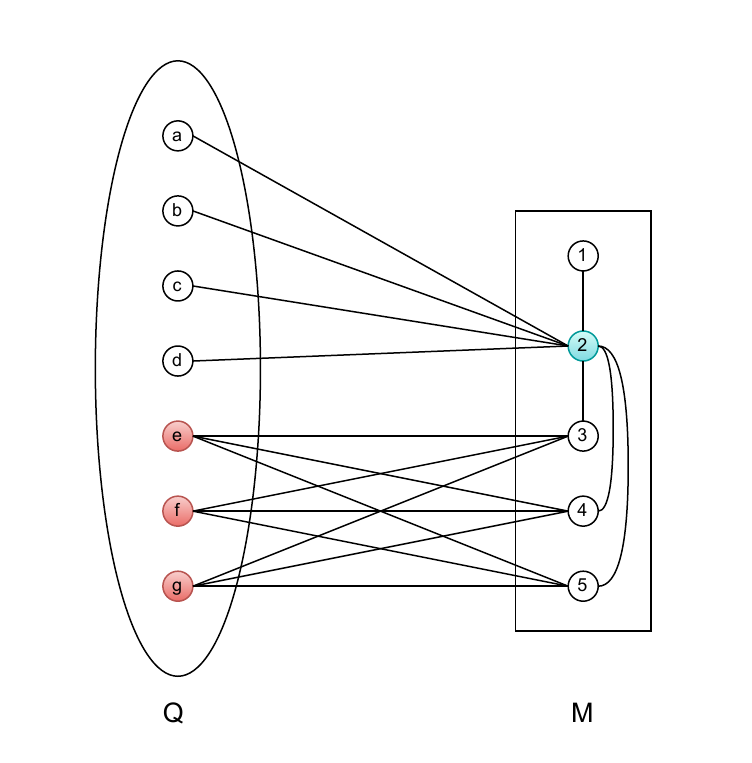}
    \caption{Graph $G_3$}
    \label{fig:g3}
\end{figure}

In any optimal ordering, each equivalence class doesn't need to be entirely contained within one block, unlike when the parameter is a vertex cover. Figure~\ref{fig:g3} is one such example, $Q$ is a clique. The set $A = \{e,f,g\}$ forms an equivalence class. 
For any optimal ordering, $\sigma$, $\sigma_M(2)=1$. 
And the set $A$ gets split into the blocks $B_1$ and $B_2$. 
Either two vertices of $A$ go into $B_1$, and the remaining vertex in $B_2$. Or one vertex in $B_1$ and two vertices in $B_2$. 
In both cases, the solution has a cost of $111$. 
If $A$ is completely in block $B_1$ or completely in block $B_2$, then the solution cost is at least $112$. 
It shows that the vertices from each equivalence class could be distributed in more than one blocks. 

But, in an optimal solution, vertices from each equivalence class can be arranged consecutively within each block, with the cost being unaffected (see Lemma~\ref{t2}). 
Once the relative ordering of modulator vertices is guessed, Lemma \ref{l1} enables us to determine the ordering of equivalence classes in each block. 
The distinguishing factor to get this ordering is their neighbors in $M$. 
For a given relative ordering $\sigma_M$ of $M$, for each block $B_i$, $i \in [k+1]$, we define right modulator degree for a vertex $u \in B_i$ as 
$$\mathrm{rm}_{\sigma_M}(u,i) = |\{v \in N_M(u) : \sigma_M(v) \geq i\}|.$$ 
For any two vertices  $u$ and $v$ in an equivalence class $A$, $\mathrm{rm}_{\sigma_M}(u,i)=\mathrm{rm}_{\sigma_M}(v,i)$. 
We call this the right modulator degree of equivalence class $A$, $\mathrm{rm}_{\sigma_M}(A,i)$ in $B_i$. 

First, we prove that for any ordering $\sigma$ of $V(G)$, if we sort the equivalence classes based on their right modulator degree in each block, then the cost of this new ordering is at most the cost of $\sigma$. 
Let $\sigma_M=\left. \sigma \right|_M$. 
\begin{lemma}\label{l4}
     Let $\sigma_M$ be a permutation of $M$. 
     Let $g_1:[\ell] \to \{A_1, A_2, \dots A_{\ell}\}$ 
 be a bijection  such that $\mathrm{rm}_{\sigma_M}(g_1(1),1) \geq \mathrm{rm}_{\sigma_M}(g_1(2),1) \geq \dots \geq \mathrm{rm}_{\sigma_M}(g_1(\ell),1)$. 
     For any ordering $\sigma$ of $V(G)$, with $\left. \sigma \right|_M = \sigma_M$, let $\widehat{\sigma}$ be an ordering of $V(G)$ such that  $$ \widehat{\sigma}(Y \cap g_1(1)) \prec \widehat{\sigma}(Y \cap g_1(2)) \prec \dots \prec \widehat{\sigma}(Y \cap g_1(\ell)) \prec \sigma_{>|Y|}$$ 
     where  the set of vertices of the clique ($V(G)\setminus M$) in block 1  of $\sigma$ is denoted by $Y$, and $\sigma_{>|Y|}$ denotes the ordering of vertices after the first $|Y|$ locations. Then, 
     $\mu_G(\widehat{\sigma}) \leq \mu_G(\sigma).$
        
     \end{lemma}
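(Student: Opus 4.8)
The plan is to show that passing from $\sigma$ to $\widehat{\sigma}$ reorders only the first $|Y|$ positions among themselves, and that the resulting change in cost is governed entirely by the right modulator degrees of the block-$1$ vertices, which the sorted order minimizes. The first ingredient I would record is the identity $\mu_G(\sigma) = \sum_{v \in V(G)} \sigma(v)\cdot rd_{\sigma}(v)$, obtained by charging each edge $e=\{x,y\}$ to its left endpoint, since $w(e,\sigma)=\min\{\sigma(x),\sigma(y)\}$ is exactly the smaller of the two positions and $v$ is the left endpoint of precisely $rd_{\sigma}(v)$ of its incident edges.

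Next I would localize the change. Since $\widehat{\sigma}$ and $\sigma$ agree on every position after $|Y|$, and both map $Y$ onto $\{1,\dots,|Y|\}$, any vertex $z$ at a position greater than $|Y|$ has all of its $Y$-neighbours to its left in both orderings; hence $rd_{\widehat{\sigma}}(z)=rd_{\sigma}(z)$ and $\widehat{\sigma}(z)=\sigma(z)$, so these terms cancel in $\mu_G(\widehat{\sigma})-\mu_G(\sigma)$ and only the contributions of $u\in Y$ remain.

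The key step is the decomposition of the right degree of a block-$1$ vertex. Because $Y$ occupies exactly positions $1,\dots,|Y|$ and every modulator vertex lies to the right of block $1$, a vertex $u\in Y$ in equivalence class $A$ has exactly $|Q|-\sigma(u)$ clique neighbours to its right and exactly $\mathrm{rm}_{\sigma_M}(A,1)$ modulator neighbours to its right, so $rd_{\sigma}(u)=(|Q|-\sigma(u))+\mathrm{rm}_{\sigma_M}(A,1)$, and identically for $\widehat{\sigma}$. Substituting this, the purely clique part $\sum_{u\in Y}\sigma(u)(|Q|-\sigma(u))$ depends only on the multiset of positions $\{1,\dots,|Y|\}$ and is therefore the same for $\sigma$ and $\widehat{\sigma}$, leaving
\[
\mu_G(\widehat{\sigma}) - \mu_G(\sigma) = \sum_{u \in Y}\bigl(\widehat{\sigma}(u) - \sigma(u)\bigr)\,\mathrm{rm}_{\sigma_M}(A(u),1),
\]
where $A(u)$ is the class of $u$.

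Finally I would conclude by the rearrangement inequality: the displayed right-hand side equals $\sum_{u\in Y}\widehat{\sigma}(u)\,\mathrm{rm}_{\sigma_M}(A(u),1)-\sum_{u\in Y}\sigma(u)\,\mathrm{rm}_{\sigma_M}(A(u),1)$, and since $\widehat{\sigma}$ assigns the smallest positions to the classes of largest right modulator degree (the order dictated by $g_1$), it minimizes this position-weighted sum over all bijections $Y\to\{1,\dots,|Y|\}$; hence the first sum is at most the second and the difference is $\le 0$. An equivalent route is an adjacent-transposition argument mirroring Lemma~\ref{l1}. I expect the main obstacle to be proving cleanly that the clique contribution is invariant under permuting $Y$, i.e. correctly isolating the $\mathrm{rm}$-term, after which the final optimization is routine.
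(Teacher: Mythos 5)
Your proof is correct, and it takes a genuinely different route from the paper's. The paper argues by iterated exchanges: it finds the first position $i$ where $\sigma$ and $\widehat{\sigma}$ disagree, swaps the vertex $w$ occupying position $i$ in $\sigma$ with the vertex $u$ that $\widehat{\sigma}$ places there (sitting at position $i+t$ in $\sigma$), computes the cost change of that single swap to be $t\,(\mathrm{rm}_{\sigma_M}(w,1)-\mathrm{rm}_{\sigma_M}(u,1))\leq 0$, and repeats until $\widehat{\sigma}$ is reached --- in effect a bubble sort in which each transposition is shown not to increase the cost. You instead give a one-shot global argument: charge each edge to its left endpoint to get $\mu_G(\sigma)=\sum_{v}\sigma(v)\cdot rd_{\sigma}(v)$, observe that contributions of vertices after position $|Y|$ are identical in the two orderings, decompose the right degree of each $u\in Y$ as $(|Q|-\sigma(u))+\mathrm{rm}_{\sigma_M}(A(u),1)$, note that the clique part depends only on the multiset of positions $\{1,\dots,|Y|\}$ and therefore cancels, and close with a single application of the rearrangement inequality. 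Both proofs rest on the same structural facts --- every modulator neighbour of a block-$1$ vertex lies to its right, and the clique contribution of a position is independent of which clique vertex occupies it --- but your decomposition makes these facts explicit and avoids tracking intermediate orderings and their right degrees, whereas the paper's computation uses them implicitly inside each swap. What the paper's version buys is that the exchange step is self-contained (it effectively re-derives the rearrangement inequality) and serves as the template that is reapplied block by block in the subsequent lemma; what yours buys is a cleaner, non-iterative proof of the single-block statement, with the only nontrivial verification being exactly the clique-invariance you identified as the main obstacle, which you resolved correctly.
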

\begin{proof}
    Recall the ordering $\widehat{\sigma}$. Here, we have $ \widehat{\sigma}(Y \cap g_1(1)) \prec \widehat{\sigma}(Y \cap g_1(2)) \prec \dots \prec \widehat{\sigma}(Y \cap g_1(l)) \prec \sigma_{>|Y|}.$

    \begin{claim}
       $\mu_{G}(\widehat{\sigma}) \leq \mu_{G}(\sigma)$ 
    \end{claim}

   \begin{claimproof}
    If $\widehat{\sigma} = \sigma$, then we are done. 
    Otherwise, let $i$ be the first location at which the vertices of $\sigma$ and $\widehat{\sigma}$ do not match. 
    Let vertex $u$ and $w$ be at location $i$ in $\widehat{\sigma}$ and $\sigma$, respectively. That is, $\widehat{\sigma}(u)=i$ and $\sigma(w)=i$. 
    Let the location of $u$ in $\sigma$ be $i+t$. That is, $\sigma(u) = i+t$. 
    Let the right degree of vertex $w$ and $u$ be $d$ and $d'$ in $\sigma$, i.e. 
    $ rd_{\sigma}(w) = d, rd_{\sigma}(u)=d'$.

    Let $\sigma_1$ be the ordering obtained from $\sigma$ by swapping $u$ and $w$. If $\mathrm{rm}_{\sigma_M}(u,1) = \mathrm{rm}_{\sigma_M}(w,1)$, then $\mu_G(\sigma_1)=\mu_G(\sigma)$ and $\sigma_1$ matches with $\widehat{\sigma}$ up to location $i$.

Now consider the case when $\mathrm{rm}_{\sigma_M}(u) \neq \mathrm{rm}_{\sigma_M}(w)$. By the definition of $\widehat{\sigma}$, $\mathrm{rm}_{\sigma_M}(u,1) > \mathrm{rm}_{\sigma_M}(w,1)$. We prove that  $\mu_G(\sigma_1)< \mu_G(\sigma)$. 
    
    \begin{equation*}
    \begin{split}
    \mu_G({\sigma}_1) - \mu_G(\sigma) & = 
    i\cdot (d+\mathrm{rm}_{\sigma_M}(u)-\mathrm{rm}_{\sigma_M}(w)) + \\
    & (i+t)\cdot (d' - \mathrm{rm}_{\sigma_M}(u)+\mathrm{rm}_{\sigma_M}(w)) 
    -i\cdot d - (i+t)d' \\
    & = t\left(\mathrm{rm}_{\sigma_M}(w)-\mathrm{rm}_{\sigma_M}(u)\right) < 0
    \end{split}
    \end{equation*}
    
    Hence $\mu_G({\sigma}_1) <  \mu_G(\sigma)$, and ${\sigma}_1$ matches up to $i^{th}$ location of $\widehat{\sigma}$. 
    Keep repeating the process to get to the ordering $\widehat{\sigma}$, and by the above analysis, the Lemma \ref{l4} holds true. 
    \end{claimproof} 
    \qed
    This completes the proof of the lemma. \qed
    
\end{proof}
   Now, we extend this analysis to the remaining blocks. 
\begin{lemma}\label{t2}
Let $\sigma_M$, a permutation of $M$, be given. 
Let $g_i:[\ell] \to \{A_1, A_2, \dots A_{\ell}\}$  
     be a bijection  such that $\mathrm{rm}_{\sigma_M}(g_i(1),i) \geq \mathrm{rm}_{\sigma_M}(g_i(2),i) \geq \dots \geq \mathrm{rm}_{\sigma_M}(g_i(\ell),i)$, for all $i \in [k+1]$. 
     For any ordering $\sigma$ of $V(G)$, with $\left. \sigma \right|_M = \sigma_M$, let $\widehat{\sigma}$ be an ordering of $V(G)$ such that 
     $$\widehat{\sigma}=\widehat{\sigma}_1(Y_1) \prec \sigma_M^{-1}(1) \prec \widehat{\sigma}_2(Y_2) \prec \dots \prec \sigma_M^{-1}(k) \prec \widehat{\sigma}_{k+1}(Y_{k+1}).$$ 
     where  the set of vertices of the clique ($V(G)\setminus M$) in block $i$  of $\sigma$ is denoted by $Y_i$ and for each $i\in [k+1]$, 
       $\widehat{\sigma}_i(Y_i) = (Y_i \cap g_i(1)) \prec (Y_i \cap g_i(2)) \prec \dots \prec (Y_i \cap g_i(\ell)).$ Then, 
     $\mu_G(\widehat{\sigma}) \leq \mu_G(\sigma).$
\end{lemma}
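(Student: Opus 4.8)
The plan is to reduce Lemma~\ref{t2} to Lemma~\ref{l4} by induction on the blocks, processing them one at a time from left to right. Lemma~\ref{l4} already establishes the result for block~$1$: given any ordering whose restriction to $M$ equals $\sigma_M$, we may sort the clique vertices sitting in block~$1$ by non-increasing right modulator degree without increasing the cost. The key observation is that Lemma~\ref{l4} is really a statement about a \emph{prefix}: it reorders only the first $|Y_1|$ locations and leaves the tail $\sigma_{>|Y_1|}$ untouched, and the proof only used the relative order within that prefix together with the right degrees of the vertices being swapped. So the natural strategy is to apply the same swapping argument block by block.

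\textbf{Inductive setup.} I would set up an induction where, after stage $i$, we have produced an ordering $\sigma^{(i)}$ that agrees with $\sigma_M$ on $M$, that already has blocks $1,\dots,i$ sorted as $\widehat\sigma_1,\dots,\widehat\sigma_i$, and that satisfies $\mu_G(\sigma^{(i)}) \le \mu_G(\sigma)$. The base case $i=1$ is exactly Lemma~\ref{l4} with $\sigma^{(0)}=\sigma$. For the inductive step, I would fix the prefix consisting of blocks $1,\dots,i$ and the modulator vertices $\sigma_M^{-1}(1),\dots,\sigma_M^{-1}(i)$ already placed, and rerun the argument of Lemma~\ref{l4} \emph{within block $i+1$}. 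Concretely, within $\sigma^{(i)}$ restricted to the locations of block $i+1$, let $j$ be the first location where it disagrees with the target sorted order $\widehat\sigma_{i+1}$, let $u,w$ be the vertices that target/current order place there, and swap them to form an intermediate ordering. The cost-difference computation is identical in form to the one in Lemma~\ref{l4}, except that the relevant degree counting the modulator neighbours to the right is now $\mathrm{rm}_{\sigma_M}(\cdot,i+1)$ rather than $\mathrm{rm}_{\sigma_M}(\cdot,1)$, because vertices to the left of block $i+1$ (in earlier blocks, and modulator vertices $\sigma_M^{-1}(1),\dots,\sigma_M^{-1}(i)$) no longer contribute to the right degree of $u$ and $w$ via the edges that change. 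Since $g_{i+1}$ sorts by non-increasing $\mathrm{rm}_{\sigma_M}(\cdot,i+1)$, each such swap either preserves cost (when the two right modulator degrees are equal) or strictly decreases it, exactly as in the claim inside Lemma~\ref{l4}.

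\textbf{The main obstacle} I expect is making precise why the cost-change computation from Lemma~\ref{l4} still goes through verbatim when we are in an interior block rather than the first one. In block~$1$ every vertex of $M$ lies to the right, so the quantity $\mathrm{rm}_{\sigma_M}(u,1)$ coincides with the full right degree into $M$; in block $i+1$ we must argue that the only edges whose weights change under an intra-block swap of two clique vertices $u,w$ are (a) the mutual clique edge between $u$ and $w$, whose contribution is symmetric and cancels, and (b) edges from $u,w$ to modulator vertices lying in blocks $\ge i+1$, which is precisely what $\mathrm{rm}_{\sigma_M}(\cdot,i+1)$ counts. Edges to clique vertices outside the pair and edges to modulator vertices in earlier blocks keep the same $\min$ value because neither endpoint's position crosses them. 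Once this bookkeeping is settled, the algebra collapses to $t\bigl(\mathrm{rm}_{\sigma_M}(w,i+1)-\mathrm{rm}_{\sigma_M}(u,i+1)\bigr)\le 0$, mirroring Lemma~\ref{l4}.

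\textbf{Conclusion.} After running the induction through $i=k+1$, the resulting ordering has every block sorted according to $g_1,\dots,g_{k+1}$ while keeping $\sigma_M$ fixed, which is exactly $\widehat\sigma$, and the chain of inequalities gives $\mu_G(\widehat\sigma)\le\mu_G(\sigma^{(k+1)})\le\cdots\le\mu_G(\sigma)$. I would note that reordering within a later block does not disturb the already-sorted earlier blocks, since swaps are confined to block $i+1$; this ensures the stages compose cleanly and that the final ordering simultaneously satisfies all $k+1$ sorting conditions.
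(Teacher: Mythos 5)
Your proposal is correct and takes essentially the same approach as the paper: the paper's entire proof of Lemma~\ref{t2} is ``repeatedly apply the technique of Lemma~\ref{l4} for each block $i \in [k+1]$,'' which is exactly the block-by-block induction you describe. The additional bookkeeping you supply --- that under an intra-block swap of clique vertices only the mutual clique edge (which cancels) and the edges to modulator vertices counted by $\mathrm{rm}_{\sigma_M}(\cdot,i+1)$ matter, so the cost change is $t\left(\mathrm{rm}_{\sigma_M}(w,i+1)-\mathrm{rm}_{\sigma_M}(u,i+1)\right)\le 0$ --- is detail the paper leaves implicit, and it is correct.
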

\begin{proof}
    
Repeatedly apply the technique of  Lemma~\ref{l4} for each block $i \in [k+1]$, to get the ordering $\widehat{\sigma}$. 
   The proof follows from Lemma~\ref{l4}.  \qed
\end{proof}
 
We call a permutation of the form $\widehat{\sigma}$ in Lemma~\ref{t2}, a nice permutation.
\subsection{Integer Quadratic Programming}

Let $G$ be the input graph and $M$ be a clique modulator of size $k$. And $Q=V(G)\setminus M$ is a clique of size $n-k$. Let $A_1,\ldots,A_{\ell}$ be the number of equivalence classes that partition $Q$. 

As a first step, we guess the ordering $\sigma_M$ of $M$ such that there is an optimum ordering $\sigma$ with $\left. \sigma \right|_M = \sigma_M$. Lemma~\ref{t2} implies that we know the ordering of vertices within each block irrespective of the number of vertices from equivalence classes in each block. We use IQP to determine the number of vertices from each equivalence class that will be present in different blocks. Thus, for each $i\in [\ell]$ and $j\in [k+1]$, we use a variable $x_{ij}$ that represents the number of vertices from $A_i$ in the block $j$. 

Recall that $Q$ is a clique of size $n-k$. 
Consider an arbitrary ordering of vertices of $Q$. 
Its cost, denoted by $\mu(Q)$ is 
$$\mu(Q)=1 \cdot (n-k-1) + 2 \cdot (n-k-2) + \dots + (n-k-1)(1).$$
This cost remains consistent for any ordering of $Q$. 

For any ordering $\phi$ of $V(G)$, $\mu_G(\phi) \geq \mu(Q)$.   
And $\mu(Q)$ can be found beforehand. 
We treat $\mu(Q)$ as the base cost for $G$. 
We need to minimize the cost after introducing the modulator vertices. 
Let the first modulator vertex $v_1$ be introduced in an ordering of $Q$. 
Assume, there are $n_1$ vertices of $Q$ after $v_1$. 
Then, the weight of each edge whose both endpoints are after $v_1$ increases by $1$. 
There are $\binom{n_1}{2}$ such clique edges.  
Hence, the total increase in the cost with respect to the clique edges is $\binom{n_1}{2}$. 
A modulator vertex $v_{i}$ is introduced after $v_{i-1}$ and $v_{i-1} \prec v_i$, for all $i\in \{2,\ldots, k\}$.
Let $n_i$ be the number of clique vertices after $v_i$ in the ordering. 
Hence, inclusion of $v_i$ causes the cost to increase by $\binom{n_i}{2}$. 
The total increase in the weights of clique edges after including all the modulator vertices is $ \sum_{i=1}^k \binom{n_i}{2}.$

Thus, in the IQP, we also use variables $n_1,n_2,\ldots,n_k$. 
To minimize the cost, we need to find the ordering of modulator vertices to be introduced and their location. 
Since $\sigma_M$ is fixed, we know the relative ordering of equivalence classes in each block using Lemma~\ref{t2}. 
Recall that we have a variable 
$x_{ij}$ for each equivalence class $A_i$ and block $B_j$. 
The variable $n_i$ represents the number of vertices from $Q$ after $v_i$ in the hypothetical optimum solution $\sigma$. 
This implies that in IQP, we need to satisfy the constraints of the following form. 
\begin{eqnarray*}
n_p &=& \sum_{j=p+1}^{k+1} \sum_{i=1}^{\ell}x_{ij} \qquad \qquad \text{for all $p\in [k]$}\\
|A_i| &=& \sum_{j=1}^{k+1}x_{ij}  \qquad \qquad\qquad \quad \text{for all $i\in [\ell]$}
\end{eqnarray*}

We explained that after introducing the modulator vertices, the increase in cost due to edges in the clique is given by the expression $ \sum_{i=1}^k \binom{n_i}{2}$. Next, we need an expression regarding the increase in cost due to edges incident on modulator vertices.  Towards that, we introduce a variable $y_p$ for each $p$  to indicate the location of the modulator vertex $v_p$ in the hypothetical optimal ordering $\sigma$. Clearly, we should satisfy the following constraints. 
$$y_p=n-(n_p+k-p) \qquad \qquad \text{for all $p\in [k]$}$$
There are $n_p$ vertices from $Q$ after $v_p$ and $k-p$ vertices from $M$ after $v_p$ in the hypothetical optimum ordering $\sigma$. Now we explain how to get the increase in the cost due to edges incident on the modulator vertex $v_p$. We know its position is $y_p$. We need to find its right degree. We use a variable $d_p$ to denote the right degree of $p$. Let $\mathrm{rm}_p = |\{v \in N_M(v_p): \sigma_M(v) > \sigma_M(v_p)\}|$. That is, $\mathrm{rm}_p$ is the number of edges between the modulator vertices with one endpoint $v_p$ and the other in $\{v_{p+1},\ldots,v_k\}$. Let $I_p\subseteq [\ell]$ such that $i\in I_p$ if and only if $v_p$ is adjacent to the vertices in $A_i$. Then, the number of neighbors of $v_p$ in the clique $Q$ that appear to the right of $v_p$ in the hypothetical solution is $\sum_{j=p+1}^{k+1} \sum_{i\in I_p} x_{ij}$. This implies that we have the following constraints.

$$d_p=\mathrm{rm}_p+ \sum_{j=p+1}^{k+1} \sum_{i\in I_p} x_{ij} \qquad \qquad \text{for all }p\in [k]$$

The increase in cost due to edges incident on modulator vertices such that the endpoints of those edges in the modulator appear before the other endpoints is
$\sum_{p=1}^{k} d_p\cdot y_p$. Now, we need to consider the cost of edges between the clique vertices and the modulator vertices such that clique vertices appear before the modulator vertex. 
We use $r_{ij}$ to denote 
the right modulator degree of vertices in the equivalence class $A_i$ from $j^{th}$ block. It is defined as follows. Let $u\in A_i$ be a fixed vertex. Then, 
$r_{ij}=|\{v \in N_M(u): \sigma_M(v) \geq j\}|.$
Notice that $r_{ij}$ is a constant, less than or equal to $k$.  
Let $y_{ij}$ denote the location of the first vertex from the equivalence class $A_i$ in block $j$ in the hypothetical solution $\sigma$. Then, we should have the following constraint. Here, we use Lemma~\ref{t2} and recall the bijection $g_j$. Let 
$J_{i,j}\subseteq [\ell]$ be such that 
$q\in J_{i,j}$ if and only if $g_j^{-1}(A_q)<g_j^{-1}(A_i)$.

$$y_{ij}=y_{j-1}+\sum_{q\in J_{i,j}} x_{qj}.$$

Here, we set $y_0=0$. The cost due to edges from $A_i$ in block $j$ to modulator vertices to its right in the ordering is 
$$r_{ij}[y_{ij}+(y_{ij}+1)+\ldots + (y_{ij}+x_{ij}-1)]=r_{ij}\left( x_{ij}\cdot y_{ij}+\binom{x_{ij}}{2}\right).$$
Thus, we summarize our IQP as follows.

\bigskip

{\bf Minimize }  $ \sum_{p=1}^{k} d_p\cdot y_p+ \sum_{i=1}^k \binom{n_i}{2} + \sum_{i=1}^{\ell} \sum_{j=1}^{k+1} r_{ij}\left( x_{ij}\cdot y_{ij}+\binom{x_{ij}}{2}\right) $

\medskip

{\bf Subject to}

\begin{eqnarray*}
n_p &=& \sum_{j=p+1}^{k+1} \sum_{i=1}^{\ell}x_{ij} \qquad \qquad \qquad\; \text{for all $p\in [k]$}\\
|A_i| &=& \sum_{j=1}^{k+1}x_{ij}  \qquad \qquad\qquad \qquad\quad  \text{for all $i\in [\ell]$}\\
y_p&=&n-(n_p+k-p) \qquad \qquad\quad \text{for all $p\in [k]$}\\
d_p&=&\mathrm{rm}_p+ \sum_{j=p+1}^{k+1} \sum_{i\in I_p} x_{ij} \qquad \qquad \text{for all }p\in [k]\\
y_{ij}&=&y_{j-1}+\sum_{q\in J_{i,j}} x_{qj} \qquad\qquad\quad \text{for all $i\in [\ell]$ and $j\in [k+1]$}
\end{eqnarray*}
Here, $n,k, r_{ij}$ and $\mathrm{rm}_p$ are constants. Also, $r_{ij},\mathrm{rm}_p\leq k$ for all $i,j$ and $p$.  This completes the construction of an IQP instance for a fixed permutation $\sigma_M$ of $M$. 

Notice that, there are at most $2^k(k+1)$ variables $x_{ij}$ and $2^k(k+1)$ variables $y_{ij}$. And there are $k$ variables for $y_p$ and $n_p$ each. Hence, the total number of variables in the above IQP formulation is at most $2^{k+1} \cdot (k+1) + 2k$. 

Next, we prove that the coefficients in the objective functions and the constraints are upper bounded by a function of $k$. 
The coefficient of $y_p \cdot d_p$ is at most $1$ in the objective function. The coefficients of $n_i$ and $n^{2}$ are also upper bounded by a constant. 
As $r_{ij} \leq k$, the coefficients of $(x_{ij})^2$, $x_{ij} \cdot y_{ij}$, and $x_{ij}$ are all at most $k$. Similarly, coefficients are bounded above by $O(k)$ in the constraints. 
Thus, by Proposition~\ref{iqp}, we can solve the above IQP instance in time $f(k)n^{O(1)}$ time for a computable function $f$.

\begin{lemma}[$\star$]\label{t3}
    Let $\sigma_M$ be an ordering of $M$ such that there is an optimum ordering $\sigma$ with $\left. \sigma \right|_M = \sigma_M$.  
An optimal solution to the IQP defined above is an optimal solution to MSVC, and vice versa.
\end{lemma}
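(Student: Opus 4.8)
The plan is to set up a cost-preserving correspondence between feasible solutions of the IQP and nice permutations of $V(G)$ whose restriction to $M$ equals $\sigma_M$, and then to use Lemma~\ref{t2} to pass from an arbitrary optimal ordering to a nice one. Concretely, I would first argue that, once $\sigma_M$ is fixed, the cost of a nice permutation is completely determined by how many vertices of each equivalence class land in each block, i.e. by the numbers $x_{ij}$: the relative order of the classes inside every block is dictated by the bijections $g_j$ of Lemma~\ref{t2}, and the vertices within one class are interchangeable (identical modulator neighbourhood, and mutual clique adjacency), so their individual identities do not affect the cost. Thus the assignment sending a nice permutation to its vector $(x_{ij})$, together with the derived quantities $(n_p,y_p,d_p,y_{ij})$, is a correspondence onto the feasible region of the IQP: the constraints $|A_i|=\sum_j x_{ij}$ with $x_{ij}\in\mathbb{Z}^+$ characterise exactly the admissible distributions, while the remaining equality constraints merely define $n_p,y_p,d_p,y_{ij}$ as the true position and right-degree data of the associated ordering.

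The heart of the proof is the cost identity
$$\mu_G(\phi)=\mu(Q)+\sum_{i=1}^{k}\binom{n_i}{2}+\sum_{p=1}^{k} d_p y_p+\sum_{i=1}^{\ell}\sum_{j=1}^{k+1} r_{ij}\left(x_{ij}y_{ij}+\binom{x_{ij}}{2}\right),$$
valid for the nice permutation $\phi$ induced by a feasible solution. I would prove it by partitioning $E(G)$ into clique--clique, modulator--modulator, and clique--modulator edges and charging each edge to its earlier endpoint (the one realising the $\min$). For clique--clique edges, starting from the base cost $\mu(Q)$, inserting $v_i$ raises the cost of precisely those edges both of whose endpoints lie after $v_i$, of which there are $\binom{n_i}{2}$, yielding the second term. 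Every edge whose \emph{modulator} endpoint is the earlier one contributes the position $y_p$ of that vertex $v_p$, and there are exactly $d_p$ such edges, giving $\sum_p d_p y_p$; this simultaneously absorbs all modulator--modulator edges and the clique--modulator edges oriented from $M$. Finally, a clique vertex $u\in A_i$ sitting in block $j$ is the earlier endpoint of exactly $r_{ij}$ modulator edges, each contributing $\phi(u)$, and since the $x_{ij}$ vertices of $A_i$ in block $j$ occupy the consecutive positions $y_{ij},\dots,y_{ij}+x_{ij}-1$, their total contribution is $r_{ij}\bigl(x_{ij}y_{ij}+\binom{x_{ij}}{2}\bigr)$, which is the third term.

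With the identity established, the equivalence of optima follows quickly. Since $\mu(Q)$ is a constant, the correspondence is cost-preserving up to this additive constant, so it carries optimal IQP solutions to minimum-cost nice permutations respecting $\sigma_M$ and conversely. It then remains to see that a minimum-cost nice permutation respecting $\sigma_M$ is genuinely optimal for MSVC: by the hypothesis of the lemma there is an optimal ordering $\sigma$ with $\left.\sigma\right|_M=\sigma_M$, and Lemma~\ref{t2} converts it into a nice permutation $\widehat{\sigma}$ with $\mu_G(\widehat{\sigma})\le\mu_G(\sigma)$, hence itself optimal; therefore the minimum over nice permutations respecting $\sigma_M$ coincides with the global MSVC optimum. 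Combining the two directions yields the claim.

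The step I expect to be the main obstacle is the bookkeeping in the cost identity: I must ensure that every edge is charged to exactly one endpoint and one term, in particular that orienting each edge towards its earlier endpoint cleanly separates the modulator-first edges (absorbed into $\sum_p d_p y_p$) from the clique-first edges (absorbed into the $r_{ij}$ term) with neither omission nor double counting, and that the consecutive block placement guaranteed by Lemma~\ref{t2} is exactly what makes the positions of the $A_i$-vertices in block $j$ the arithmetic run $y_{ij},\dots,y_{ij}+x_{ij}-1$ that produces the $\binom{x_{ij}}{2}$ contribution.
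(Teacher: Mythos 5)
Your proof is correct and follows essentially the same route as the paper: a cost-preserving correspondence (up to the additive constant $\mu(Q)$) between feasible IQP solutions and nice permutations respecting $\sigma_M$, combined with Lemma~\ref{t2} to replace an arbitrary optimal ordering by a nice one of no greater cost. The only difference is presentational—you spell out the edge-charging cost identity explicitly, whereas the paper delegates it to the preceding explanation of the IQP construction.
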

\begin{proof}
    Let $I$ be an optimal solution for IQP. One can construct an ordering $\sigma$ of $V(G)$ from $I$ such that $\sigma$ is a nice permutation, and $cost(I) = \mu_G(\sigma).$

Let $\phi$ be an optimal ordering for MSVC such that it is a nice permutation (See Lemma~\ref{t2}). 

Let $N_{i,j}$ be the number of vertices of $A_i$ in $j^{th}$ block of solution $\phi$. 
Construct a solution $I_{\phi}$ for IQP by setting $x_{ij}=N_{i,j}$ and $y_p$ is the location of $v_p$ in the ordering $\phi$ for all $p\in [k]$. Values of other variables can be obtained from values of $x_{ij}$s and $y_p$s. The explanation for the construction of IQP implies that $\mu_G(\phi) = cost(I_{\phi})$. 
Since $I_{\phi}$ is a feasible solution for IQP,  
$cost(I) \leq cost(I_{\phi})$. We also know that $cost(I) = \mu_G(\sigma).$ This implies that 

$$\mu_G(\sigma)=cost(I) \leq cost(I_{\phi}) = \mu_G(\phi).$$
But, $\phi$ is an optimal ordering for MSVC. 
Hence, $\mu_G(\sigma) = \mu_G(\phi)$.
This implies that 
$$\mu_G(\sigma)=cost(I) = cost(I_{\phi}) = \mu_G(\phi).$$
This completes the proof of the lemma. 
\qed
\end{proof}

For each ordering $\sigma_M$ of $M$, we construct an IQP instance as described above and solve. Finally, we consider the best solution among them and construct an ordering from it according to the explanation given for the IQP formulation.   
Correctness of the algorithm follows from Lemma~\ref{t3}.

\bibliographystyle{splncs04}

\end{document}